\documentclass[journal]{IEEEtran}
\usepackage[numbers,sort&compress]{natbib}
\usepackage{amsfonts}
\usepackage[cmex10]{amsmath}
\usepackage{array}
\usepackage{mdwmath}
\usepackage{mdwtab}
\usepackage{eqparbox}
\usepackage[font=normalsize]{caption}
\usepackage{hyperref}
\usepackage{stfloats}
\usepackage{url}
\usepackage{amssymb}
\usepackage{float}
\usepackage{indentfirst}
\usepackage{makeidx}
\usepackage{tabularx}
\usepackage{cases}
\usepackage{multirow}
\usepackage{booktabs}
\usepackage{diagbox}
\usepackage{mathtools}
\usepackage{color}
\usepackage{stfloats}
\usepackage{lipsum}
\usepackage{amsfonts}
\usepackage{bm}
\usepackage{dsfont}
\usepackage{lipsum}
\usepackage{amsthm}
\usepackage{graphicx, epsfig, subfigure}
\usepackage[font=small]{caption}
\allowdisplaybreaks[4]

\makeatletter

\newif\if@restonecol
\makeatother

\usepackage[linesnumbered,ruled,vlined]{algorithm2e}
\usepackage{algpseudocode}
\usepackage{amsmath, bm}

\renewcommand{\arraystretch}{1.2}

\begin{document}
\title{Resource-aware Sub-network Allocation for Heterogeneous Decentralized Learning}
\author{Zhuojun~Tian, Ming~Xiao and Mikael~Skoglund, \textit{Fellow, IEEE}
\thanks{Z. Tian, M. Xiao and M. Skoglund are with the Department of Information Science and Engineering, School of Electrical Engineering and Computer Science, KTH Royal Institute of Technology, 10044 Stockholm, Sweden. (e-mail: zhuojun@kth.se; mingx@kth.se; skoglund@kth.se).} 
}
\maketitle

\begin{abstract}
System heterogeneity in computation, communication, and memory can substantially limit synchronous decentralized learning, where each training round is constrained by resource-limited nodes. Existing approaches provide limited support for jointly determining node-specific model capacities from deployment resources. We propose RASA-DL, a resource-aware sub-network allocation framework that treats the maximum sub-network width assigned to each node as a deployment variable. RASA-DL combines offline hardware profiling of computation and memory costs with deployment-aware profiling of learning utility under heterogeneous decentralized training. 
Using these profiles together with node- and link-level resource information, we formulate deployment as a bottleneck-latency minimization problem subject to per-node memory constraints and a required aggregate learning utility. Exploiting the monotone feasibility of the latency threshold, RASA-DL efficiently identifies the deployment through a binary-search-based allocation procedure. Experiments under controlled resource heterogeneity demonstrate favorable accuracy--latency tradeoffs. Under the Moderate and Severe regimes, the minimum-latency operating point improves accuracy over the largest globally feasible uniform deployment by 2.80 and 2.43 percentage points, respectively, at the same modeled round latency. The advantage is further amplified under strong Non-IID data heterogeneity.
\end{abstract}
\begin{IEEEkeywords}
Decentralized learning, heterogeneous edge systems, resource allocation, sub-network allocation, system heterogeneity.
\end{IEEEkeywords}


\section{Introduction}

Decentralized learning enables networked devices to collaboratively train machine learning models through peer-to-peer communication without relying on a centralized parameter server \cite{tian2026semantic}. By distributing both local computation and model exchange across participating nodes, it provides a flexible learning paradigm for edge and distributed networks while avoiding the communication bottleneck and single point of failure associated with centralized aggregation. In practical deployments, however, participating devices are rarely homogeneous. Their computation capabilities, communication bandwidths, and available memory can differ substantially, resulting in highly diverse per-round training costs. Such system heterogeneity is particularly problematic for synchronous decentralized learning, where the progress of each communication round is constrained by slow or resource-limited participants. Consequently, deploying an identical model architecture across all devices can lead to severe synchronization delays, inefficient utilization of resource-rich devices, or even memory infeasibility on constrained nodes.

Existing decentralized learning algorithms primarily focus on enabling efficient peer-to-peer optimization and model exchange \cite{tian2023distributed}. For example, DFedAvgM allows each node to perform multiple local updates and subsequently aggregate models received from neighboring nodes without centralized coordination \cite{sun2023decentralized}. Communication-efficient approaches further reduce network overhead through techniques such as sparse model exchange and adaptive peer communication \cite{tang2023gossipfl}. While these methods improve the efficiency and scalability of decentralized training, decentralization itself does not eliminate the impact of heterogeneous device resources. When all nodes are required to train models of comparable complexity, differences in computation speed, communication conditions, and memory capacity can still translate directly into system-level synchronization bottlenecks.

System heterogeneity has been extensively investigated in federated learning. Representative approaches accommodate heterogeneous devices by allowing variable amounts of local computation or by selectively involving efficient participants \cite{li2020fedprox,lai2021oort}. Other studies reduce computation and communication overhead through adaptive model pruning or jointly optimize pruning and resource-management decisions \cite{jiang2023prunefl,liu2022jointpruning}. More recently, model-heterogeneous federated learning has emerged as an alternative approach in which clients directly train models with different capacities. HeteroFL distributes sub-networks according to client capabilities \cite{diao2021heterofl}, while FjORD constructs nested sub-networks through ordered dropout \cite{horvath2021fjord}. FedRolex introduces rolling submodel extraction to improve parameter coverage across heterogeneous clients \cite{alam2022fedrolex}, and ScaleFL jointly adapts model width and depth for resource-constrained devices \cite{ilhan2023scalefl}. These studies demonstrate the potential of adapting model capacity to heterogeneous device capabilities.

Model heterogeneity has also begun to receive attention in serverless decentralized learning. The authors in \cite{tian2025communication} study communication-efficient personalized distributed learning under both data and node heterogeneity, with the goal of reducing communication overhead while accommodating heterogeneous participants. DisPFL employs personalized sparse masks and decentralized sparse training to reduce communication and computation costs \cite{dai2022dispfl}. DeSA facilitates knowledge transfer among clients with heterogeneous local models through synthetic anchors \cite{huang2024desa}. CompFL supports cooperative training of multiple predefined model architectures in both centralized and device-to-device settings \cite{xu2025compfl}, while DeMoFL further investigates peer-to-peer knowledge federation among model-heterogeneous mobile devices \cite{guo2026demofl}. 
The works in \cite{tian2026semantic, tian2025compositional} propose to transmit semantic information instead of model parameters to enable heterogeneous models among agents.
These methods mainly focus on enabling efficient training, personalization, aggregation, or knowledge transfer under heterogeneous decentralized settings. The question of how heterogeneous model capacities should be configured according to the deployment resources of individual nodes, however, remains comparatively less explored in fully decentralized learning.

Resource allocation provides another important approach to deal with system heterogeneity. In wireless federated learning, joint computation and communication resource management has been investigated by optimizing variables such as bandwidth, processing workload, and device computation speed \cite{zeng2021resource}. Other studies jointly consider model pruning, device selection, and wireless resource allocation to balance learning performance and training latency \cite{liu2022jointpruning}. Resource-aware designs have also been considered in decentralized settings. For example, grouped federated learning organizes devices with similar computation and communication capabilities to reduce training latency \cite{yin2022grouped}, while GossipFL adapts sparse peer-to-peer communication to improve bandwidth utilization \cite{tang2023gossipfl}. Nevertheless, these approaches primarily optimize communication, computation, scheduling, or topology-related decisions. They do not directly address the deployment-stage problem of jointly assigning node-specific model capacities in a decentralized system while simultaneously accounting for computation, communication, memory feasibility, and network-wide learning capability.

Motivated by this gap, we consider sub-network allocation, where the maximum model capacity deployed at each node is determined before decentralized training. The key challenge is to avoid resource bottlenecks on constrained nodes while still utilizing the capacity of resource-rich nodes and maintaining sufficient learning capability.
To address this problem, we propose Resource-aware Sub-network Allocation for Decentralized Learning (RASA-DL), a profiling-guided resource-aware sub-network allocation framework for heterogeneous decentralized learning. RASA-DL combines offline characterization of candidate sub-networks with current node- and link-level resource conditions to determine heterogeneous node-specific model capacities. The resulting allocation problem is formulated to minimize the synchronous bottleneck latency while satisfying memory and aggregate learning-utility requirements, and is solved efficiently by exploiting its monotone feasibility structure.
The main contributions of this work are summarized as follows.
\begin{itemize}

\item We propose RASA-DL, a profiling-guided resource-aware sub-network allocation framework for heterogeneous decentralized learning. RASA-DL treats sub-network size as a deployment variable and jointly considers heterogeneous computation, communication, memory, and learning requirements during configuring.

\item We develop a deployment-aware profiling methodology that characterizes candidate sub-networks from both system and learning perspectives. Offline hardware profiling captures the computation latency and memory consumption of candidate sub-networks, while reference decentralized environments profile their learning capability under the nested sub-network training and aggregation.

\item We formulate heterogeneous sub-network allocation as a synchronous bottleneck-latency minimization problem subject to memory and aggregate learning-utility constraints. By exploiting the monotone feasibility structure with respect to the latency threshold, we develop an efficient binary-search-based allocation algorithm that avoids exhaustive enumeration of heterogeneous sub-network assignments.

\item Extensive experiments are conducted under controlled system and statistical heterogeneity to evaluate the proposed framework from both learning and system perspectives. The results demonstrate that RASA-DL provides favorable accuracy--latency tradeoffs. Additional experiments validate the benefit of deployment-aware capability profiling and show that the heterogeneous deployments remain effective under different datasets and increasingly Non-IID data distributions.

\end{itemize}
The rest of this article is organized as follows. Section~\ref{sec2} introduces the decentralized learning system model and formulates the resource-aware sub-network allocation problem. Section~\ref{sec3} presents the offline profiling framework. Section~\ref{sec4} develops the proposed resource-aware allocation algorithm and describes the subsequent heterogeneous decentralized learning procedure. Section~\ref{sec5} presents the experimental setup and evaluates RASA-DL under different levels of system and statistical heterogeneity. Finally, Section~\ref{sec6} gives the conclusion.

\section{System Model and Problem Formulation} \label{sec2}
\subsection{Decentralized Learning}
Consider a decentralized learning network represented by an undirected graph $\mathcal{G}=(\mathcal{V},\mathcal{E})$, where $\mathcal{V}=\{1,\ldots,N\}$ denotes the set of participating nodes and $\mathcal{E}$ denotes the communication links. The neighbor set of node $i$ is denoted by $\mathcal{N}_i=\{j:(i,j)\in\mathcal{E}\}$. Each node $i$ owns a local dataset $\mathcal{D}_i$ and performs local model updates without exchanging raw data.

To accommodate heterogeneous device capabilities, we consider a canonical parent network $\mathcal M^{\mathrm p}$ and a finite family of structured compact sub-networks $\{\mathcal M(r):r\in\mathcal R\}$, where $\mathcal R=\{r^{(1)},\ldots,r^{(K)}\}$ denotes the candidate retention-ratio set with $0<r^{(1)}<\cdots<r^{(K)}\le 1$. Let $\Theta(r)$ denote the parameter set associated with $\mathcal M(r)$. The compact sub-networks are constructed in a nested manner such that
\begin{equation}
\Theta(r^{(1)})\subset\Theta(r^{(2)})\subset\cdots\subset\Theta(r^{(K)}).\label{eq:nested_parameter_sets}
\end{equation}

Accordingly, the parameters of the parent network can be partitioned into nested incremental blocks. By defining $\Theta(r^{(0)})=\varnothing$, the $k$th parameter block is
\begin{equation}
\mathcal B_k = \Theta(r^{(k)})\setminus\Theta(r^{(k-1)}), \qquad k=1,\ldots,K.
\label{eq:parameter_block}
\end{equation}
Hence, we have $\Theta(r^{(k)})=\bigcup_{q=1}^{k}\mathcal B_q$.

Each node $i$ is assigned a retention ratio $r_i\in\mathcal R$ according to its available system resources and executes the corresponding compact sub-network $\mathcal M(r_i)$. A node assigned a larger retention ratio therefore contains all parameter blocks of a node assigned a smaller retention ratio, together with additional higher-order blocks. This nested parameterization establishes a consistent parameter correspondence across heterogeneous sub-networks and enables block-wise decentralized aggregation: a parameter block is exchanged and aggregated only among nodes whose assigned sub-networks contain that block.
Note that retention ratio $r_i$ is treated as a system-level decision variable rather than a trainable model parameter. In general, a smaller $r_i$ reduces the computation, memory footprint, and communication payload of node $i$, whereas a larger $r_i$ preserves a higher-capacity sub-network. The resource-aware selection of $\{r_i\}_{i=1}^{N}$ is formulated in Section \ref{sec2_3}.

For a node whose retention ratio is $r_i$, local training does not necessarily activate the maximum-width sub-network at every local iteration. Instead, an active retention ratio $s\in\mathcal R_{\le r_i}$ is sampled according to
\begin{equation}
q_{\rho}(s\mid r)=\begin{cases}1,& r=r_{\min},~s=r_{\min},\\[2mm]\rho,& r>r_{\min},~s=r,\\[2mm]\dfrac{1-\rho}{|\mathcal R_{\le r}|-1},& r>r_{\min},~s<r,\\[3mm]0,& s>r,\end{cases}
\label{eq:nested_sampling_distribution}
\end{equation}
where $r_{\min}=\min\mathcal R$ and $\rho\in[0,1]$. Thus, $\rho$ represents the probability of activating the maximum deployed sub-network during a local training iteration, whereas the remaining probability mass is uniformly distributed among its nested lower-width sub-networks.

We consider synchronous decentralized learning. In each communication round, every node first performs local training using its assigned compact sub-network and subsequently exchanges model information with its neighboring nodes. A node proceeds to the next round only after the required neighboring updates have been received and processed. Consequently, heterogeneous computation and communication resources lead to different node completion times, and the synchronous training process is bottlenecked by the slowest participating node.

\subsection{Resource Models}
Within the network, each node $i$ can be characterized by a resource vector $\bm z_i=[c_i,M_i]$, where $c_i$ denotes its effective computation capability and $M_i$ denotes its available memory. The communication link $(i,j)$ is characterized by $\bm h_{ij}=[b_{ij},d_{ij}]$, where $b_{ij}$ denotes the effective bandwidth and $d_{ij}$ denotes the data-size-independent link latency, including propagation, queueing, and protocol-access delays.

\textbf{Computation latency. }
Let $\widehat T_i^{\mathrm{step}}(s)$ denote the profiled latency of one training step for active sub-network $\mathcal M(s)$ in node $i$, including forward propagation, backward propagation, and the optimizer update. The corresponding values are obtained from the profiling-based resource model described in Section~\ref{sec3}. 

Under the nested-width training policy, an active retention ratio $s\in\mathcal R_{\le r_i}$ is sampled according to $q_\rho(s\mid r_i)$. Therefore, with $K^{\mathrm{loc}}$ local training steps per communication round, the expected local training latency is
\begin{equation}
\widehat T_i^{\mathrm{tr}}(r_i;\rho)=K^{\mathrm{loc}}\sum_{s\in\mathcal R_{\le r_i}}q_\rho(s\mid r_i)\widehat T_i^{\mathrm{step}}(s).\label{eq:training_latency}
\end{equation}
Here, $r_i$ determines the maximum deployed sub-network, while $\rho$ controls the distribution of active sub-networks during local training.

Each node also performs local processing for overlap-aware aggregation of the received neighboring models. This cost depends on the neighboring sub-network configurations and is therefore not determined by $r_i$ alone. Our measurements show that the corresponding processing time contributes only approximately $1\%$ of the modeled round latency across the evaluated deployments. We therefore neglect this term and approximate the per-round computation latency as
\begin{equation}
\widehat T_i^{\mathrm{cmp}}(r_i;\rho)\approx\widehat T_i^{\mathrm{tr}}(r_i;\rho).\label{eq:computation_latency}
\end{equation}

\textbf{Communication latency. }
Let $P(r_i)$ denote the number of model parameters contained in the compact sub-network $\mathcal M(r_i)$. In contrast to unstructured pruning, $P(r_i)$ is generally not equal to $r_iP$, since structured channel pruning changes both the input and output dimensions of the retained layers. 

During decentralized training, node $i$ broadcasts the parameters of its maximum deployed compact sub-network $\mathcal M(r_i)$ to its neighboring nodes. Although smaller nested sub-networks may be activated during individual local training steps, they only affect the local computation workload and do not change the model exchanged at the end of the round. Since all candidate sub-networks belong to the same nested family, a receiving node can directly identify the parameter subset shared with its own local sub-network and perform aggregation only over the overlapping parameters.
Accordingly, the amount of information transmitted by node $i$ in one communication round can be modeled as $D_i(r_i)=b_w P(r_i)+D_i^{\mathrm{meta}}$, where $b_w$ denotes the number of bits used to represent one model parameter and $D_i^{\mathrm{meta}}$ denotes the additional metadata required to identify the corresponding compact-sub-network structure.
The communication latency from node $i$ to node $j$ is therefore
\begin{equation}
T_{ij}^{\mathrm{comm}}(r_i)=d_{ij}+\frac{D_i(r_i)}{b_{ij}}.
\end{equation}
Assuming that node $i$ communicates with all its neighbors in parallel and proceeds only after all outgoing transmissions are completed, its communication latency in one round is given by
\begin{equation}
\widehat T_i^{\mathrm{comm}}(r_i)=\max_{j\in\mathcal N_i}T_{ij}^{\mathrm{comm}}(r_i).\label{eq:communication_latency}
\end{equation}

\textbf{Memory consumption. }
The peak memory consumption of node $i$ is determined by its maximum deployed sub-network $\mathcal M(r_i)$. Let $S(r)$ denote the peak training memory of candidate sub-network $\mathcal M(r)$ obtained through offline hardware profiling. Since node $i$ must retain sufficient memory to execute its maximum assigned sub-network throughout decentralized training, its memory requirement is modeled as
\begin{equation}
\widehat S_i(r_i)=S(r_i).
\label{eq:profiled_memory}
\end{equation}
Accordingly, configuration $r_i$ is feasible at node $i$ only if
\begin{equation}
S(r_i)\leq M_i.
\end{equation}


\subsection{Resource-Aware sub-network Allocation} \label{sec2_3}
System heterogeneity can lead to severe synchronization inefficiency in decentralized learning. Assigning an identical model size to all nodes may cause resource-constrained devices to become persistent stragglers, while resource-rich nodes remain idle waiting for neighboring updates. We therefore determine the node-specific retention ratios before decentralized training to balance the synchronous round latency against the learning capability of the allocated sub-networks.

For each candidate retention ratio $r\in\mathcal R$, let $\widehat{\mathcal U}(r)$ denote its offline-profiled learning utility. The profiling procedure used to obtain $\widehat{\mathcal U}(r)$ is presented in Section~\ref{sec3_3}.

In parallel, the actual compact-network training workload is profiled on the target hardware. Together with the current computation and communication conditions, the completion time of node $i$ under configuration $r_i$ and $\rho$ is estimated as
\begin{equation}
\widehat{\tau}_i(r_i; \rho)=\widehat T_i^{\mathrm{cmp}}(r_i; \rho)+\widehat T_i^{\mathrm{comm}}(r_i). \label{eq:completion_time}
\end{equation}

\newtheorem{remark}{Remark}
\begin{remark}
Although the candidate sub-networks are structurally nested, i.e., $\mathcal M(r_a)\subseteq\mathcal M(r_b), r_a<r_b$, their empirical learning utilities are not necessarily monotonic with respect to $r$. Different compact architectures may exhibit different optimization and generalization behavior after training. We therefore directly use the profiled utility values over the finite candidate set without imposing a monotonicity assumption on $\widehat{\mathcal U}(r)$.
\end{remark}

Based on the profiled learning utility and the estimated resource costs, the resource-aware sub-network allocation problem can be formulated as
\begin{subequations}
\label{prob:sub-network_configuration}
\begin{align}
\min_{\bm r}\quad&\max_{i\in\mathcal V}\widehat{\tau}_i(r_i; \rho),\label{prob:configuration_obj}\\
\mathrm{s.t.}\quad&\widehat S_i(r_i)\le M_i,\quad\forall i\in\mathcal V,\label{prob:configuration_memory}\\
&\sum_{i=1}^{N}p_i\widehat{\mathcal U}(r_i)\ge U^{\mathrm{tar}},\label{prob:configuration_utility}\\
&r_i\in\mathcal R,\quad\forall i\in\mathcal V.\label{prob:configuration_set}
\end{align}
\end{subequations}
where $\bm r=[r_1,\ldots,r_N]$, $p_i\geq0$ denotes the contribution weight of node $i$ satisfying $\sum_{i=1}^{N}p_i=1$, and $U^{\mathrm{tar}}$ denotes the required aggregate learning utility. Here, $\rho$ is a prescribed system-level training parameter rather than an allocation variable. The optimization therefore determines only the node-specific maximum retention ratios $\bm r=[r_1,\ldots,r_N]$.

Problem \eqref{prob:sub-network_configuration} explicitly captures the tradeoff between system efficiency and learning capability. Smaller sub-networks can be assigned to resource-constrained nodes to alleviate computation, memory, and communication bottlenecks, while larger sub-networks can be retained by resource-rich nodes to satisfy the aggregate learning-utility requirement. The objective minimizes the modeled synchronous round bottleneck latency rather than the average node latency, thereby directly targeting the straggler effect in synchronous decentralized learning. In addition to improving the utilization of heterogeneous computing resources, a smaller and more stable round completion time improves the predictability and scalability of decentralized training and is expected to reduce the overall wall-clock training time in practical deployments.

\section{Offline Profiling Framework}\label{sec3}
The resource-aware allocation in Section \ref{sec2_3} relies on three types of prior information: a common family of nested compact sub-networks, the empirical learning utility associated with each candidate sub-network, and the resource profiles required to estimate its execution cost on heterogeneous devices. Rather than obtaining this information during decentralized training, we construct it through an offline profiling procedure, which is detailed below.

\subsection{Overall Offline Profiling Procedure}
The offline procedure consists of three stages. First, a canonical parent model $\mathcal M^{\mathrm p}$ is established as the common structural and parameter reference. A shared channel-importance ranking is then learned from the parent model and used to construct the nested compact-sub-network family $\{\mathcal M(r):r\in\mathcal R\}$. Second, the learning capability of each candidate width is evaluated in a lightweight reference decentralized environment to obtain the deployment-aware utility profile $\mathcal Q_{\mathrm U}$. Third, the executable sub-networks are benchmarked on reference hardware to obtain their training-latency and memory profiles.

These operations are performed before decentralized training and are independent of the subsequent node-specific allocation. In particular, the channel ranking, compact architectures, and learning-utility profile remain fixed after the offline stage. At deployment time, the offline information is combined with the current computation and communication conditions to evaluate the candidate configurations and solve the resource-aware allocation problem in \eqref{prob:sub-network_configuration}.

This offline-online separation is important for two reasons. First, architecture construction and capability evaluation do not need to be repeated whenever the system conditions change. Second, the allocation stage does not require trial execution of decentralized training under every possible combination of node-specific sub-networks, whose number grows combinatorially with the number of participating nodes.

\subsection{Nested Compact-sub-network Construction}
The construction starts from a canonical parent network $\mathcal M^{\mathrm p}$. Rather than independently pruning the parent network for each retention ratio, we learn auxiliary channel scores using labeled training data while keeping the parent parameters and BN statistics fixed. Channels are ranked within each residual stage
according to the aggregated $L_2$ norms of the learned score filters. All candidate subnetworks retain prefixes of the same stage-wise orderings, ensuring nested parameter alignment across model sizes and facilitating overlap-aware decentralized aggregation.

For each structured prunable layer $l\in\mathcal L_{\mathrm s}$ with $C_l$ output channels, let $\bm\pi_l=[\pi_l(1),\pi_l(2),\ldots,\pi_l(C_l)]$ denote the learned channel ordering from the most to the least important channel. Given a candidate retention ratio $r\in\mathcal R$, the retained output-channel set is obtained through prefix selection:
\begin{equation}
\left\{\pi_l(1),\ldots, \pi_l\left(\left\lceil rC_l\right\rceil\right)\right\}.
\end{equation}
The corresponding executable compact network associated with $r$ can be denoted by $\mathcal M(r)$ and induces the nested model family assumed in Section \ref{sec2}. 
The retained parameters are inherited from their corresponding locations in $\mathcal M^{\mathrm p}$. Unlike element-wise masking, this construction produces an executable network with reduced tensor dimensions. Therefore, the resulting architecture can provide actual reductions in computation, memory consumption, and communication payload rather than merely reducing the number of active weights.
The common ranking also provides a consistent parameter correspondence across different sub-networks. Specifically, if $r_a<r_b$, every parameter represented in $\mathcal M(r_a)$ has a corresponding parameter in $\mathcal M(r_b)$. This property is essential for heterogeneous decentralized learning because neighboring nodes with different retention ratios can identify and aggregate their overlapping parameters without introducing an additional architecture-matching procedure.
The common channel ordering is learned only once and remains fixed after the offline construction stage. 
Consequently, no channel-ranking or structural-pruning operation is performed during decentralized training. The resulting nested family $\mathcal M(r)$ constitutes the structural component of the offline sub-network library.

\subsection{Deployment-Aware Learning-Capability Profiling}
\label{sec3_3}
The retention ratio determines not only the resource cost of a compact sub-network, but also the learning performance that it can sustain during decentralized training. A straightforward approach is to evaluate each candidate sub-network independently and use its standalone validation accuracy as an architecture-level utility. However, as shown in Section \ref{sec5_3}, such an estimate may not accurately reflect deployment-time behavior and degrade the training performance. 
This discrepancy arises because, in the decentralized system, a compact sub-network is trained jointly with heterogeneous neighboring models and is affected by nested-width sampling and overlap-aware parameter aggregation.

To obtain a utility estimation that better reflects the intended deployment regime, we construct a lightweight \emph{reference decentralized environment}, denoted by $\mathcal E_{\mathrm{ref}}$. The reference environment contains a small set of decentralized nodes, denoted by $\mathcal V_{\mathrm{ref}}$, and includes representative maximum-width configurations from the candidate set $\mathcal R$. It uses the same nested sub-network construction, local training rule, width-sampling policy, normalization semantics, and overlap-aware decentralized aggregation as the target system, while requiring substantially less computation than repeatedly evaluating full-scale deployments.

Let
\begin{equation}
\mathcal V_r^{\mathrm{ref}}=\left\{i\in\mathcal V_{\mathrm{ref}}: r_i=r\right\}
\end{equation}
denote the set of reference nodes whose maximum deployed width is $r$. For profiling run $j\in{1,\ldots,J}$, after a fixed budget of decentralized training, the learning capability associated with candidate width $r$ is measured as
\begin{equation}
\widehat{\mathcal U}_j(r)=\frac{1}{|\mathcal V_r^{\mathrm{ref}}|}\sum_{i\in\mathcal V_r^{\mathrm{ref}}}A^{\mathrm{val}}_{i,j},\label{eq:reference_environment_utility}
\end{equation}
where $A^{\mathrm{val}}_{i,j}$ denotes the validation accuracy of reference node $i$ at the end of profiling run $j$.

The deployment-aware learning utility of candidate width $r$ is then defined as the empirical mean over the $J$ profiling runs:
\begin{equation}
\widehat{\mathcal U}(r)=\frac{1}{J}\sum_{j=1}^{J}\widehat{\mathcal U}_j(r).\label{eq:profiled_utility}
\end{equation}
Accordingly, the resulting learning-capability profile is
\begin{equation}
\mathcal Q_{\mathrm U}=\left\{\left(r,\widehat{\mathcal U}(r)\right):r\in\mathcal R\right\}.
\label{eq:utility_profile}
\end{equation}

The purpose of $\mathcal E_{\mathrm{ref}}$ is not to reproduce a particular target deployment exactly. Instead, it provides a low-cost calibration environment in which every candidate width is exposed to the decentralized interactions that are absent from standalone evaluation. In particular, the profiling procedure captures the effects of training alongside heterogeneous maximum-width sub-networks, nested active-width sampling, and parameter exchange over overlapping model coordinates.

The resulting quantity $\widehat{\mathcal U}(r)$ should therefore be interpreted as a deployment-aware, configuration-level capability estimation, rather than as a prediction of the final accuracy of an individual node. The latter depends on the complete distributed optimization trajectory, local data, neighboring configurations, and the realized network environment. By contrast, $\widehat{\mathcal U}(r)$ provides a common scalar utility associated with candidate width $r$ and is used only to guide the subsequent resource-allocation problem.
Once constructed, $\mathcal Q_{\mathrm U}$ remains fixed during resource allocation and decentralized training. Importantly, the profiling procedure preserves the separable structure of the allocation problem: each candidate width is still represented by a single scalar utility value, and therefore no modification to the feasibility characterization or latency-threshold search is required.

Finally, the proposed profiling procedure does not impose any analytical or monotonic relationship between retention ratio and learning capability. Although larger sub-networks generally provide greater representational capacity, their realized performance under decentralized training may depend on the interaction dynamics. The allocation algorithm therefore operates directly on the empirically profiled values in $\mathcal Q_{\mathrm U}$, allowing arbitrary learning-capability profiles over the finite candidate set.

\begin{figure*}[t]
    \centering
    \includegraphics[width=\textwidth]{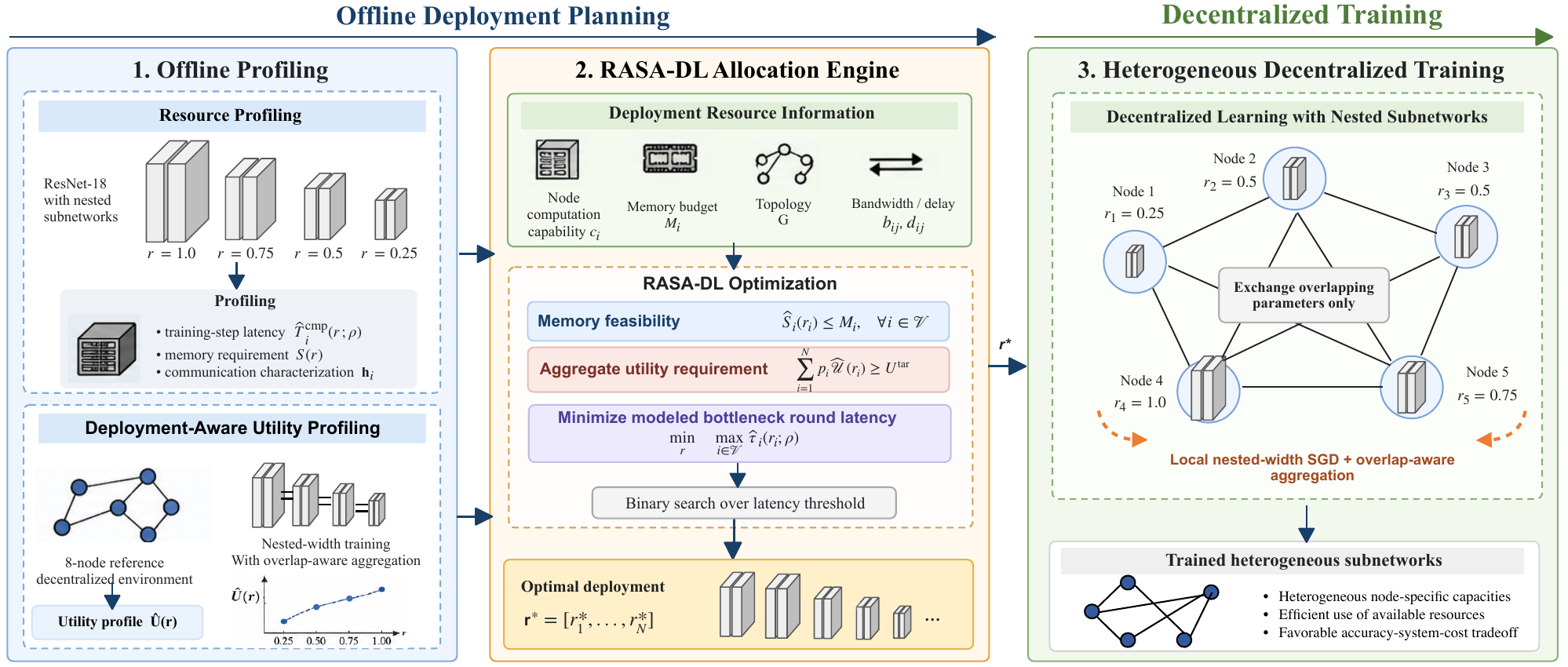}
    \caption{Overview of RASA-DL. Offline profiling characterizes the resource cost and deployment-aware learning utility of nested candidate sub-networks. Given node- and link-level resource information, RASA-DL determines heterogeneous node-specific sub-network capacities by minimizing the modeled synchronous bottleneck cost under memory and utility constraints. The resulting sub-networks are then trained through nested decentralized learning with overlap-aware parameter exchange and aggregation.}
    \label{fig:framework}
\end{figure*}

\subsection{Hardware Resource Profiling}
In addition to learning capability, resource-aware allocation requires the execution cost associated with each candidate compact sub-network. Analytical quantities such as parameter count or floating-point operations alone may not accurately reflect wall-clock training cost because actual execution depends on tensor dimensions, memory-access behavior, and hardware efficiency. We therefore characterize candidate sub-networks through empirical hardware profiling.

For each retention ratio $r\in\mathcal R$, the compact sub-network $\mathcal M(r)$ is executed on the reference hardware using the same basic training workload adopted during decentralized learning. After an initial warm-up period, we measure the latency of a complete training step, including forward propagation, backward propagation, and the optimizer update. The resulting computation profile is denoted by
\begin{equation}
\mathcal Q_{\mathrm T}=\left\{\left(r,\overline T^{\mathrm{step}}(r)\right):r\in\mathcal R\right\}.
\end{equation}

Peak training memory is measured in the same profiling process, yielding
\begin{equation}
\mathcal Q_{\mathrm S}=\left\{\left(r,S(r)\right):r\in\mathcal R\right\}.
\end{equation}
Since the candidate set is finite, these profiled values can be stored directly and reused during subsequent deployment.

Communication cost does not require empirical prediction. For every candidate sub-network, its parameter count $P(r)$ and hence its communication payload are known once the compact architecture is materialized. Given the current link bandwidth and propagation delay, the communication latency can therefore be evaluated directly using the model in Section~\ref{sec2}.

After offline profiling, each candidate retention ratio is associated with
\begin{equation}
\left\{\mathcal M(r),P(r),\widehat{\mathcal U}(r),\overline T^{\mathrm{step}}(r),S(r)\right\}.
\end{equation}
These quantities form a reusable profiling library that is combined with the current node and link conditions during resource-aware sub-network allocation.

\section{Algorithm Development}\label{sec4}
The overall RASA-DL framework is summarized in Fig.~\ref{fig:framework}.
We next derive an efficient solution to Problem~\eqref{prob:sub-network_configuration}.

\subsection{Algorithm Derivation}
Before solving Problem~\eqref{prob:sub-network_configuration}, we first remove node-specific dominated configurations.  For node $i$, each candidate configuration can be characterized by $\left(\widehat{\tau}_i(r_i; \rho),\widehat S_i(r_i),\widehat{\mathcal U}(r_i)\right)$. We define that configuration $r_b$ is dominated by $r_a$ for node $i$ if
\begin{subequations}
\label{eq:dominance_conditions}
\begin{align}
\widehat{\tau}_i(r_a; \rho)&\le\widehat{\tau}_i(r_b; \rho),\\
\widehat S_i(r_a)&\le\widehat S_i(r_b),\\
\widehat{\mathcal U}(r_a)&\ge\widehat{\mathcal U}(r_b),
\end{align}
\end{subequations}
with at least one strict inequality. In this case, replacing $r_b$ by $r_a$ cannot increase the completion time or memory consumption and does not reduce the profiled learning utility. Hence, $r_b$ can be removed without affecting optimality. The remaining configurations constitute the node-specific Pareto set $\mathcal R_i^{\mathrm P}$.

After Pareto pruning, the min--max problem in \eqref{prob:sub-network_configuration} can be solved through a sequence of latency-feasibility tests. For a candidate bottleneck latency threshold $T$, the objective is to determine whether there exists a sub-network-allocation vector satisfying all resource and learning constraints.

For a given $T$, define the feasible configuration set of node $i$ as
\begin{equation}
\mathcal F_i(T)=\left\{r\in\mathcal R_i^{\mathrm P}\;\middle|\;\widehat{\tau}_i(r;\rho)\le T,\;\widehat S_i(r)\le M_i\right\}.
\label{eq:configuration_feasible_set}
\end{equation}
Note that the learning utility of each candidate compact sub-network has already been obtained through offline profiling as $\widehat{\mathcal U}(r)$ in \eqref{eq:profiled_utility}. For any $T$ such that $\mathcal F_i(T)\neq\varnothing$, define the maximum learning utility achievable by node $i$ under latency threshold $T$ as
\begin{equation}
\overline{\mathcal U}_i(T)=\max_{r\in\mathcal F_i(T)}\widehat{\mathcal U}(r).
\label{eq:max_utility}
\end{equation}
Let
\begin{equation}
\widetilde r_i(T)\in\arg\max_{r\in\mathcal F_i(T)}\widehat{\mathcal U}(r)
\label{eq:best_configuration_under_T}
\end{equation}
denote one utility-maximizing sub-network configuration.

\newtheorem{proposition}{Proposition}
\newtheorem{theorem}{Theorem}
\begin{proposition}\label{p1}
(Feasibility Characterization) A latency threshold $T$ is feasible for Problem \eqref{prob:sub-network_configuration} if and only if
\begin{equation}
\mathcal F_i(T)\neq\varnothing,\quad\forall i\in\mathcal V,\label{eq:nonempty_feasible_sets}
\end{equation}
and
\begin{equation}
\overline{\mathcal U}_{\mathrm{sys}}(T)=\sum_{i=1}^{N}p_i\overline{\mathcal U}_i(T)\ge U^{\mathrm{tar}}.\label{eq:utility_feasibility}
\end{equation}
\end{proposition}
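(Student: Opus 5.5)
We first pin down what ``feasible'' means for a threshold $T$. We call $T$ feasible for Problem~\eqref{prob:sub-network_configuration} if there exists an allocation $\bm r=[r_1,\ldots,r_N]$ with $r_i\in\mathcal R$ that satisfies the memory constraint \eqref{prob:configuration_memory}, the utility constraint \eqref{prob:configuration_utility}, and $\max_i\widehat\tau_i(r_i;\rho)\le T$. The proof then has two directions. The key structural fact in both is that, once $T$ is fixed, the constraints decouple across nodes except for the single additive utility constraint. Moreover, that constraint has nonnegative weights $p_i$, so it is maximized node by node.

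\emph{Sufficiency.} Suppose \eqref{eq:nonempty_feasible_sets} and \eqref{eq:utility_feasibility} hold.
\begin{itemize}
\item Since each $\mathcal F_i(T)$ is nonempty and finite, the maximizer $\widetilde r_i(T)$ in \eqref{eq:best_configuration_under_T} exists.
\item By the definition \eqref{eq:configuration_feasible_set}, each $\widetilde r_i(T)$ satisfies $\widehat\tau_i\le T$ and $\widehat S_i\le M_i$.
\item We have $\widetilde r_i(T)\in\mathcal R_i^{\mathrm P}\subseteq\mathcal R$.
\item Finally, $\sum_i p_i\widehat{\mathcal U}(\widetilde r_i(T))=\overline{\mathcal U}_{\mathrm{sys}}(T)\ge U^{\mathrm{tar}}$.
\end{itemize}
Hence $\widetilde{\bm r}(T)$ is an allocation whose bottleneck latency is at most $T$, so $T$ is feasible.

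\emph{Necessity.} Suppose some $\bm r$ is feasible with bottleneck latency at most $T$. The subtle point is that $r_i$ need not lie in the Pareto set $\mathcal R_i^{\mathrm P}$, whereas $\mathcal F_i(T)$ is defined only over $\mathcal R_i^{\mathrm P}$. We handle this with a replacement lemma: for every $r\in\mathcal R$ there is some $r'\in\mathcal R_i^{\mathrm P}$ that weakly dominates it in all three coordinates. To prove the lemma:
\begin{itemize}
\item If $r$ is dominated, pick a dominating $r^{(1)}$. If $r^{(1)}$ is itself dominated, pick a dominator of it, and continue.
\item Dominance, together with the strictness of at least one inequality, is a strict partial order (irreflexive and transitive). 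So the chain cannot cycle.
\item Since $\mathcal R$ is finite, the chain terminates at a nondominated element $r'$.
\item The weak inequalities compose along the chain, so $r'$ weakly dominates $r$.
\end{itemize}
This needs $\mathcal R_i^{\mathrm P}$ to mean exactly the set of nondominated elements. We should state explicitly that ties, i.e.\ configurations with identical triples, are kept, or that at least one representative of each tie class is kept.

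Now replace each $r_i$ by its Pareto representative $r_i'$. Then $\widehat\tau_i(r_i')\le\widehat\tau_i(r_i)\le T$ and $\widehat S_i(r_i')\le\widehat S_i(r_i)\le M_i$, so $r_i'\in\mathcal F_i(T)$, which is therefore nonempty. This gives \eqref{eq:nonempty_feasible_sets}. For \eqref{eq:utility_feasibility}, we have $\widehat{\mathcal U}(r_i)\le\widehat{\mathcal U}(r_i')\le\overline{\mathcal U}_i(T)$. Multiplying by $p_i\ge0$ and summing over $i$ gives
\[
U^{\mathrm{tar}}\le\sum_{i=1}^N p_i\widehat{\mathcal U}(r_i)\le\overline{\mathcal U}_{\mathrm{sys}}(T).
\]

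The main obstacle is the replacement lemma, that is, making Pareto pruning rigorously lossless. It depends on the strict-partial-order and finiteness argument, and on the convention for ties. The remaining steps follow directly from the definitions, using the separability of the max constraint and the nonnegativity of the weights $p_i$. No monotonicity of $\widehat{\mathcal U}$ in $r$ is needed anywhere, which is consistent with the Remark.
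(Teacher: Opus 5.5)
Your proof is correct and follows essentially the same route as the paper: once $T$ is fixed the constraints decouple across nodes, and since $p_i\ge 0$ the aggregate utility over $\prod_i\mathcal F_i(T)$ is maximized node by node, which yields $\overline{\mathcal U}_{\mathrm{sys}}(T)$. Your replacement lemma rigorously justifies a step the paper takes for granted, namely identifying the allocations feasible under $T$ with $\prod_i\mathcal F_i(T)$ even though $\mathcal F_i(T)$ ranges only over $\mathcal R_i^{\mathrm P}$ (the paper relies only on its one-step dominance remark). Your tie caveat is already satisfied, because configurations with identical triples never dominate one another under the paper's ``at least one strict inequality'' definition.
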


\begin{proof}
If $\mathcal F_i(T)=\varnothing$ for any $i$, then $T$ is infeasible. 
Otherwise, since the feasible allocation set is the Cartesian product $\prod_{i=1}^{N}\mathcal F_i(T)$ and the aggregate utility objective is additively separable across nodes, the maximum aggregate utility achievable under $T$ is
\begin{equation}
\overline{\mathcal U}_{\mathrm{sys}}(T)=\max_{\substack{r_i\in\mathcal F_i(T),\\i\in\mathcal V}}\sum_{i=1}^{N}p_i\widehat{\mathcal U}(r_i)=\sum_{i=1}^{N}p_i\overline{\mathcal U}_i(T).\label{eq:maximum_aggregate_utility}
\end{equation}
Therefore, $T$ is feasible if and only if \eqref{eq:nonempty_feasible_sets} and \eqref{eq:utility_feasibility} hold.
\end{proof}

\begin{theorem}\label{t1}
(Monotone Feasibility) The feasibility characterized in Proposition \ref{p1} is monotonic with respect to the latency threshold $T$. 
Consequently, the feasible thresholds form an upper interval over the ordered candidate set $\mathcal T$, and the minimum feasible threshold can be obtained by binary search.
\end{theorem}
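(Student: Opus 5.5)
The argument rests on the observation that the feasible sets $\mathcal F_i(T)$ in \eqref{eq:configuration_feasible_set} are nested in $T$. The plan is to establish this nesting first and then read off both claims of the theorem from Proposition~\ref{p1}. Fix $T\le T'$. The memory condition $\widehat S_i(r)\le M_i$ does not depend on $T$. The latency condition $\widehat\tau_i(r;\rho)\le T$ implies $\widehat\tau_i(r;\rho)\le T'$. Hence $\mathcal F_i(T)\subseteq\mathcal F_i(T')$ for every $i\in\mathcal V$, and nonemptiness of $\mathcal F_i(T)$ implies nonemptiness of $\mathcal F_i(T')$. Moreover, a maximum over a larger finite set is no smaller, so $\overline{\mathcal U}_i(T)\le\overline{\mathcal U}_i(T')$ whenever $\mathcal F_i(T)\neq\varnothing$. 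Since $p_i\ge 0$, the weighted sum satisfies $\overline{\mathcal U}_{\mathrm{sys}}(T)\le\overline{\mathcal U}_{\mathrm{sys}}(T')$. Both conditions \eqref{eq:nonempty_feasible_sets} and \eqref{eq:utility_feasibility} are therefore preserved when $T$ is replaced by $T'$. By Proposition~\ref{p1}, feasibility of $T$ implies feasibility of $T'$.

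Next I would make the candidate set $\mathcal T$ explicit, since the statement refers to it without defining it. I take $\mathcal T=\{\widehat\tau_i(r;\rho):i\in\mathcal V,\,r\in\mathcal R_i^{\mathrm P}\}$, sorted increasingly. This is a finite set with at most $NK$ elements.

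I would then justify that restricting the search to $\mathcal T$ loses nothing. The sets $\mathcal F_i(T)$ change only when $T$ crosses one of these values. So for any feasible real $T$, the largest element of $\mathcal T$ not exceeding $T$ produces the same sets $\mathcal F_i$ and is also feasible. In particular, the optimal value of Problem~\eqref{prob:sub-network_configuration} equals $\max_i\widehat\tau_i(r_i^\star;\rho)$ for some optimal allocation, and this value lies in $\mathcal T$. Pareto pruning does not change the optimum, by the dominance argument given before Proposition~\ref{p1}.

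Combining these facts, the indicator of feasibility on the sorted set $\mathcal T$ is a monotone step function: infeasible up to some index and feasible from then on. The feasible thresholds therefore form an upper interval $\{T\in\mathcal T:T\ge T^\star\}$, which may be empty if even $\max\mathcal T$ is infeasible. Binary search over the indices of $\mathcal T$ then returns $T^\star$ using $O(\log|\mathcal T|)$ feasibility tests. Each test costs $O(\sum_i|\mathcal R_i^{\mathrm P}|)$ by Proposition~\ref{p1}.

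The main obstacle is not the monotonicity itself, which is immediate. It is stating precisely what $\mathcal T$ is and showing that the minimum over $\mathcal T$ coincides with the optimum of Problem~\eqref{prob:sub-network_configuration}. For this I would use the breakpoint argument above, and I would handle the empty-interval case separately as the case where the problem is infeasible.
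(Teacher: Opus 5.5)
Your proposal is correct and follows the paper's proof. The inclusion $\mathcal F_i(T)\subseteq\mathcal F_i(T')$ for $T\le T'$ gives $\overline{\mathcal U}_i(T)\le\overline{\mathcal U}_i(T')$, hence a non-decreasing $\overline{\mathcal U}_{\mathrm{sys}}$ and an upper interval of feasible thresholds; your explicit handling of nonemptiness, of $p_i\ge 0$, and of the breakpoint argument that searching over $\mathcal T$ loses nothing spells out what the paper leaves implicit or only asserts after the theorem. The paper's $\mathcal T$ in \eqref{eq:candidate_latency_set} also imposes $\widehat S_i(r)\le M_i$, but your larger set is harmless, since the extra values cause no change in any $\mathcal F_i(T)$ and so do not alter the minimum feasible element.
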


\begin{proof}
Assume $T_1$ is feasible and $T_2\ge T_1$. Note that $\mathcal F_i(T_1)\subseteq\mathcal F_i(T_2)$, and consequently
\begin{equation}
\overline{\mathcal U}_i(T_1)\le\overline{\mathcal U}_i(T_2).
\end{equation}
Hence, $\overline{\mathcal U}_{\mathrm{sys}}(T)$ is non-decreasing in $T$. Therefore, the feasible thresholds form an upper interval, which completes the proof.
\end{proof}

To exploit the monotone feasibility established in Theorem \ref{t1}, the candidate latency set is constructed as
\begin{equation}
\mathcal T=\left\{\widehat{\tau}_i(r, \rho)\,\middle|\,i\in\mathcal V,\;r\in\mathcal R_i^{\mathrm P},\;\widehat S_i(r)\le M_i\right\}.\label{eq:candidate_latency_set}
\end{equation}
By Proposition~\ref{p1} and Theorem~\ref{t1}, the optimal bottleneck latency is given by
\begin{equation*}
T^\star=\min\{T\in\mathcal T:T\text{ is feasible}\}.
\end{equation*}
The corresponding optimal sub-network allocation can be obtained by
\begin{equation*}
r_i^\star=\widetilde r_i(T^\star),\quad\forall i\in\mathcal V.
\end{equation*}
which satisfies the utility constraint and achieves the minimum bottleneck latency.

\subsection{Resource-aware Sub-network Allocation}
The algorithm can be summarized as in Algorithm \ref{alg:retention_allocation}.
\begin{algorithm}[!htp]
\caption{\textbf{Resource-Aware sub-network Allocation for Decentralized Learning (RASA-DL)}}\label{alg:retention_allocation}
\textbf{Input:} Pareto sets $\{\mathcal R_i^{\mathrm P}\}_{i=1}^{N}$, weights $\{p_i\}$, memory budgets $\{M_i\}$, target utility $U^{\mathrm{tar}}$, and fixed sampling parameter $\rho$.\\
Construct $\mathcal T$ according to \eqref{eq:candidate_latency_set} and sort it in ascending order.\\
\If{$\mathcal T=\varnothing$}{
    \textbf{Return} infeasible
}
Set $L\leftarrow1$ and $H\leftarrow|\mathcal T|$.\\
\If{$\mathcal T[H]$ is infeasible according to \eqref{eq:nonempty_feasible_sets}--\eqref{eq:utility_feasibility}}{
    \textbf{Return} infeasible
}
\While{$L<H$}{
    $q\leftarrow\lfloor(L+H)/2\rfloor$ and $T\leftarrow\mathcal T[q]$.\\
    \If{$T$ satisfies \eqref{eq:nonempty_feasible_sets} and \eqref{eq:utility_feasibility}}{
        $H\leftarrow q$
    }
    \Else{
        $L\leftarrow q+1$
    }
}
$T^\star\leftarrow\mathcal T[L]$ and $r_i^\star\leftarrow\widetilde r_i(T^\star)$ for all $i\in\mathcal V$.\\
\textbf{Return} the optimal bottleneck latency $T^\star$ and the sub-network allocation $\mathbf r^\star$.
\end{algorithm}
The proposed optimization operates only on configuration-level latency, memory, and learning-utility values obtained from the offline profiles and current system conditions, without executing decentralized training during allocation.
Let
\begin{equation}
K_{\mathrm P}=\sum_{i=1}^{N}|\mathcal R_i^{\mathrm P}|
\end{equation}
denote the total number of retained configurations. Constructing and sorting the candidate latency set requires $\mathcal O(K_{\mathrm P}\log K_{\mathrm P})$ time. Each feasibility test scans at most $K_{\mathrm P}$ configurations, and the binary search performs $\mathcal O(\log K_{\mathrm P})$ tests. Given the node-specific Pareto sets, the allocation procedure has a complexity of $\mathcal O(K_{\mathrm P}\log K_{\mathrm P})$. Since all profiling procedures are completed offline only once, the online optimization introduces negligible computational overhead compared with decentralized model training.

\subsection{Nested Compact Decentralized Learning}
After the resource-aware allocation is completed, each node $i$ is assigned a maximum retention ratio $r_i^\star$ and initializes the corresponding compact sub-network $\mathcal M(r_i^\star)$ from the canonical parent model. The assigned ratio remains fixed throughout decentralized training.

Following the nested-width sampling strategy of FjORD \cite{horvath2021fjord}, node $i$ samples an active retention ratio at each local iteration as
\begin{equation}
s_i^{t,k}\sim q_{\rho}(\cdot\mid r_i^\star),\qquad s_i^{t,k}\in\mathcal R_{\le r_i^\star},
\end{equation}
and performs SGD updates only on the parameters of the sampled sub-network $\Theta(s_i^{t,k})$. Parameters outside the active sub-network remain unchanged during that local iteration.

After $K^{\mathrm{loc}}$ local updates, each node exchanges the parameters of its maximum deployed sub-network with its neighbors. Since the candidate sub-networks are nested, aggregation is performed only over overlapping parameters. For parameter block $\mathcal B_q\subseteq\Theta(r_i^\star)$, define
\begin{equation}
\mathcal A_i^{(q)}=\{i\}\cup\left\{j\in\mathcal N_i:\mathcal B_q\subseteq\Theta(r_j^\star)\right\}.
\end{equation}
Node $i$ then performs
\begin{equation}
\bm\theta_{i,\mathcal B_q}^{t+1}=\sum_{j\in\mathcal A_i^{(q)}}w_{ij}^{(q)}\bm\theta_{j,\mathcal B_q}^{t,K^{\mathrm{loc}}},
\end{equation}
where the mixing weights satisfy
\begin{equation}
\sum_{j\in\mathcal A_i^{(q)}}w_{ij}^{(q)}=1.
\end{equation}

The resulting training procedure is summarized in Algorithm~\ref{alg:nested_decentralized_training}. The canonical parent model is used only for initialization, while subsequent decentralized training operates exclusively on the allocated compact sub-networks.
\begin{algorithm}[!htp]
\caption{\textbf{Nested Compact Decentralized Learning}}
\label{alg:nested_decentralized_training}
\textbf{Input:} Graph $\mathcal G$, allocated widths $\{r_i^\star\}$, sampling distribution $q_\rho$, local steps $K^{\mathrm{loc}}$, communication rounds $K^{\mathrm{rnd}}$, and learning rate $\eta$.\\
Initialize each node $i$ with $\mathcal M(r_i^\star)$.\\
\For{$t=0,\ldots,K^{\mathrm{rnd}}-1$}{
    \ForEach{node $i\in\mathcal V$ \textbf{in parallel}}{
        \For{$k=0,\ldots,K^{\mathrm{loc}}-1$}{
            Sample $s_i^{t,k}\sim q_\rho(\cdot\mid r_i^\star)$.\\
            Perform one local SGD update on $\mathcal M(s_i^{t,k})$.
        }
        Broadcast the parameters of $\mathcal M(r_i^\star)$ to neighboring nodes.
    }
    \ForEach{node $i\in\mathcal V$ \textbf{in parallel}}{
        Aggregate each shared parameter block over $\mathcal A_i^{(q)}$.
    }
}
\textbf{Return} trained compact models $\{\bm\theta_i\}_{i=1}^{N}$.
\end{algorithm}

\section{Experiment}\label{sec5}
\subsection{Experimental Settings}
We first construct a nested family of sub-networks from a common canonical parent model. The parent model is initialized from an ImageNet-pretrained ResNet-18 and adapted to CIFAR-10 through full-model fine-tuning. A fixed structured channel ranking is then obtained from the parent model and shared by all candidate sub-networks. Based on this ranking, four nested configurations with retention ratios $\mathcal{R}=\{1.0,0.75,0.5,0.25\}$ are constructed, corresponding to stage widths of $[64,128,256,512]$, $[48,96,192,384]$, $[32,64,128,256]$, and $[16,32,64,128]$, respectively. Their parameter counts are approximately $11.18$, $6.29$, $2.80$, and $0.70$ million. All sub-networks inherit the corresponding initialization parameters from the same canonical parent and preserve a nested parameter structure.

To estimate the learning capability of each candidate width, we use the deployment-aware profiling procedure described in Section~\ref{sec3_3}.
The reference environment consists of eight decentralized nodes connected by a fixed ring topology. The maximum widths $\{0.25,0.5,0.75,1.0\}$ are assigned twice each across the eight virtual nodes. Profiling uses an IID partition of CIFAR-10, with 45,000 samples used for decentralized training and 5,000 held out for validation. Each profiling run lasts 30 communication rounds with $100$ local optimization steps per node and a batch size of $64$. 
Nested-width sampling uses $\rho=0.7$, together with the same overlap-aware aggregation rule employed in the main decentralized training. The round-30 validation accuracies are averaged across nodes with the same maximum width and across three training seeds. This yields the frozen deployment-aware utilities, which are used only as profiling-based capability estimates for deployment optimization and should not be interpreted as the final test accuracies of the system-level experiments. Unless otherwise stated, the CIFAR-10 training data are partitioned IID across the 20 nodes with equal sample sizes.

The execution characteristics of each candidate sub-network are profiled on an NVIDIA GH200 GPU, which serves as the reference hardware platform. A complete training step includes forward propagation, backward propagation, and the optimizer update. With a batch size of $64$, the measured step latencies for retention ratios $1.0$, $0.75$, $0.5$, and $0.25$ are $15.42$, $14.31$, $12.83$, and $11.42$ ms, respectively, averaged over three profiling runs after warm-up. The corresponding training-memory requirements are approximately $1.629$, $1.222$, $0.845$, and $0.471$ GB. These measurements provide the empirical execution profiles used as the basis of the subsequent system-level evaluation.

Since a physical testbed containing a sufficiently diverse set of heterogeneous devices and network links is not available, we evaluate RASA-DL in a controlled simulated decentralized environment. Specifically, we consider $20$ nodes connected by a fixed undirected graph with $40$ communication edges and an average node degree of $4$. Node-level computation costs are derived from the measured reference-hardware profiles and adjusted according to simulated computation capabilities, while memory budgets and link-level bandwidths and propagation delays are independently configured. Communication costs are then evaluated using the analytical model in Section~\ref{sec2}. Consequently, the reported system metric is a \emph{modeled bottleneck round latency}, rather than measured end-to-end wall-clock latency on a physical heterogeneous testbed.

To systematically study the effect of system heterogeneity, we construct three resource regimes, denoted as \emph{Mild}, \emph{Moderate}, and \emph{Severe}. The same topology and mean resource levels are maintained across the three regimes, while only the dispersion of node and link resources is varied. Specifically, the arithmetic mean computation slowdown factor, memory budget, link bandwidth, and propagation delay are fixed at $2.5$, $1.3$ GB, $0.65$ Gbps, and $10.5$ ms, respectively. The computation slowdown factors have coefficients of variation (CVs) of $0.119$, $0.349$, and $0.450$ for the Mild, Moderate, and Severe regimes; the corresponding memory-budget CVs are $0.133$, $0.358$, and $0.400$, and the bandwidth CVs are $0.109$, $0.449$, and $0.550$. We summarize the overall degree of resource heterogeneity as
\begin{equation}
H=\sqrt{\frac{\mathrm{CV}_{\mathrm{comp}}^2+\mathrm{CV}_{\mathrm{mem}}^2+\mathrm{CV}_{\mathrm{bw}}^2}{3}},
\end{equation}
which gives $H=0.121$, $0.388$, and $0.471$, respectively. In the simulation, the computation capability of node $i$ is instantiated through a node-specific slowdown factor $\gamma_i$, which scales the profiled reference latency of each active sub-network, while the memory constraint is evaluated as $S(r_i)\leq M_i$. Under the resulting resource realizations, the largest globally feasible uniform width is $0.5$ in the Mild regime and $0.25$ in both the Moderate and Severe regimes.

\subsection{Decentralized Learning Performance}
\label{sec5_2}
We next evaluate the decentralized learning performance of RASA-DL under the three controlled resource regimes. All methods use the same decentralized topology, data partition, communication-round budget, and evaluation protocol. Unless otherwise stated, training is conducted for $50$ communication rounds, with $100$ local optimization steps per node in each round, a batch size of $64$, and SGD with a learning rate of $0.001$, momentum of $0.9$, and weight decay of $5\times10^{-4}$. Each configuration is evaluated using three random seeds, and the weighted mean test accuracy across all nodes is reported. System efficiency is characterized by the modeled bottleneck round latency determined by the compute and communication costs of the corresponding deployment. We define the modeled bottleneck round latency as the maximum,
over all nodes, of the sum of local-training compute latency and parallel-neighbor communication latency.

\paragraph{Baselines and RASA-DL operating points}
We consider both system-level baselines and an existing decentralized learning method. \emph{Fixed Uniform} assigns the same fixed-width sub-network to all nodes and trains only that width, whereas \emph{Nested Uniform} uses a common maximum width but adopts the same nested-width sampling mechanism as RASA-DL during local training. Both methods aggregate through averaging.
We further include DFedAvgM~\cite{sun2023decentralized} as a literature baseline for decentralized federated learning. Its method-specific momentum persistence and neighbor-mixing rules are preserved, while the deployed model width is chosen as the largest width that is feasible on every node. Accordingly, the maximum feasible uniform width is $0.5$ in the Mild regime and $0.25$ in the Moderate and Severe regimes.

For RASA-DL, we select three representative operating points from the deployment frontier under different utility targets. RASA-DL-A denotes the minimum-latency endpoint, RASA-DL-B corresponds to the common intermediate profiling target $U^{\mathrm{tar}}=0.89$, and RASA-DL-C denotes the maximum-profiled-utility endpoint. The allocation of RASA-DL-C coincides with the Memory-Max assignment, which independently selects the largest memory-feasible sub-network at each node. Hence, RASA-DL-C also serves as the resource-only maximum-capacity reference without requiring a duplicate training run. For RASA-DL and Nested Uniform, the nested-width sampling parameter is fixed to $\rho=0.7$.

\paragraph{Accuracy--latency tradeoff}
Fig.~\ref{fig:accuracy_latency} presents the primary system-level comparison by jointly considering final learning performance and bottleneck latency. For each resource regime, we report the maximum-feasible Fixed Uniform and Nested Uniform deployments, DFedAvgM, and the three representative RASA-DL operating points. Each point denotes the three-seed mean weighted node accuracy, with vertical error bars indicating one sample standard deviation. The baseline methods are shown as isolated operating points, whereas RASA-DL-A, RASA-DL-B, and RASA-DL-C are connected to visualize the accuracy--latency frontier enabled by deployment-aware heterogeneous allocation.

\begin{figure*}[t]
    \centering
    \includegraphics[width=\textwidth]{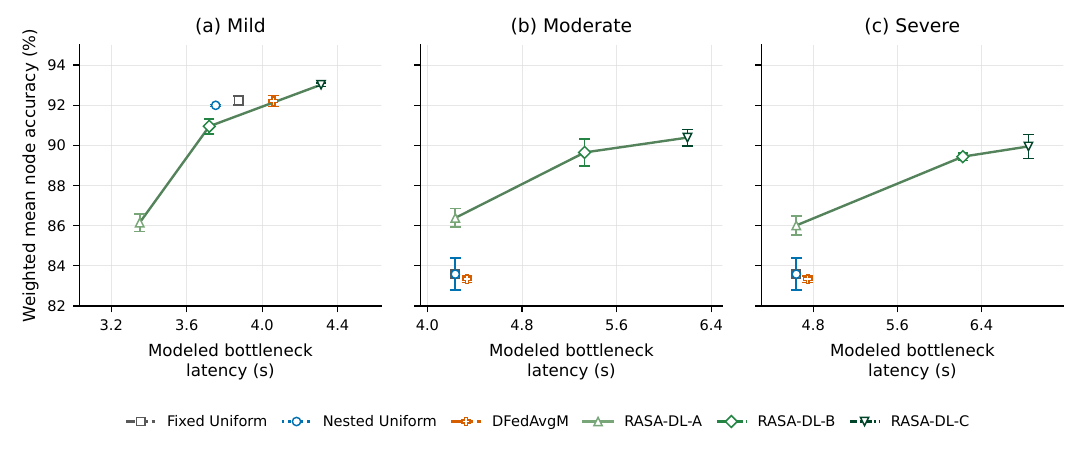}
    \caption{Accuracy--latency tradeoff under (a) Mild, (b) Moderate, and (c) Severe resource heterogeneity. Each point reports the three-seed mean weighted node accuracy, with vertical error bars indicating $\pm1$ sample standard deviation. Baselines are shown as isolated operating points, while RASA-DL-A, RASA-DL-B, and RASA-DL-C are connected to represent the RASA-DL deployment frontier. The maximum feasible uniform width is $0.5$ in Mild and $0.25$ in Moderate and Severe. The panels use scenario-specific latency ranges for readability while sharing the same accuracy scale.}
    \label{fig:accuracy_latency}
\end{figure*}

The results exhibit a clear change in behavior across the three resource regimes. Under Mild heterogeneity, every node can accommodate the $0.5$-width sub-network. Consequently, the maximum-feasible uniform deployments and DFedAvgM can employ a relatively expressive common model and remain competitive with the RASA-DL frontier. In this regime, RASA-DL mainly provides a broader set of deployment choices spanning lower latency and higher learning capability, rather than uniformly dominating the strongest common-model deployment.

The situation changes under Moderate and Severe heterogeneity. Although the mean primitive resource levels remain unchanged, increased resource dispersion reduces the largest globally feasible uniform width to $0.25$. Uniform-model methods must therefore restrict every node to the smallest candidate sub-network, even though many individual nodes can accommodate larger models. 
In both regimes, RASA-DL-A achieves higher accuracy than both the maximum-feasible uniform deployment and DFedAvgM at the same modeled bottleneck latency. Moving from RASA-DL-A to RASA-DL-B and RASA-DL-C further improves learning performance at the cost of additional latency, yielding multiple non-dominated operating points rather than a single fixed deployment.

The corresponding numerical results are summarized in Table~\ref{tab:main_results}. At the minimum-latency operating point, RASA-DL-A improves the latency-matched maximum-feasible uniform deployment by $2.8017$ and $2.4265$ percentage points in the Moderate and Severe regimes, respectively. Compared with DFedAvgM, RASA-DL-A achieves gains of $3.0505$ and $2.6753$ percentage points, respectively, at the same modeled bottleneck round latency. In Mild, by contrast, the maximum-feasible uniform width remains $0.5$, and both Fixed Uniform and DFedAvgM achieve accuracies above $92\%$, confirming that a sufficiently expressive common deployment remains competitive.

\begin{table*}[t]
\centering
\caption{Decentralized learning performance under controlled resource heterogeneity. Accuracy is the weighted mean node test accuracy over
three random seeds and is reported as mean $\pm$ sample standard deviation. Deployment is reported as the number of nodes assigned to retention ratios $[0.25,0.50,0.75,1.00]$. Latency denotes the modeled bottleneck round latency, including local training computation and inter-node model communication.}
\label{tab:main_results}
\setlength{\tabcolsep}{7pt}
\renewcommand{\arraystretch}{1.08}
\begin{tabular}{llccc}
\toprule
Regime & Method & Deployment Counts & Accuracy (\%) & Latency (s) \\
\midrule
Mild & Fixed Uniform & $[0,20,0,0]$ & $92.2272 \pm 0.2262$ & 3.8754 \\
& Nested Uniform & $[0,20,0,0]$ & $91.9955 \pm 0.0508$ & 3.7536 \\
& DFedAvgM & [0,20,0,0] & $92.2042 \pm 0.2699$ & 3.8754 \\
& RASA-DL-A & $[9,7,4,0]$ & $86.1517 \pm 0.4289$ & 3.3506 \\
& RASA-DL-B & $[2,13,5,0]$ & $90.9498 \pm 0.3730$ & 3.7196 \\
& RASA-DL-C & $[0,8,12,0]$ & $93.0220 \pm 0.0905$ & 4.3120 \\

\midrule
Moderate & Fixed Uniform & $[20,0,0,0]$ & $83.5870 \pm 0.8011$ & 4.2358 \\
& Nested Uniform & $[20,0,0,0]$ & $83.5870 \pm 0.8011$ & 4.2358 \\
& DFedAvgM & [20,0,0,0] & $83.3382 \pm 0.1582$ & 4.2358 \\
& RASA-DL-A & $[11,4,1,4]$ & $86.3887 \pm 0.4576$ & 4.2358 \\
& RASA-DL-B & $[5,7,4,4]$ & $89.6477 \pm 0.6701$ & 5.3293 \\
& RASA-DL-C & $[5,4,4,7]$ & $90.3877 \pm 0.4117$ & 6.1990 \\

\midrule
Severe & Fixed Uniform & $[20,0,0,0]$ & $83.5870 \pm 0.8011$ & 4.6387 \\
& Nested Uniform & $[20,0,0,0]$ & $83.5870 \pm 0.8011$ & 4.6387 \\
& DFedAvgM & [20,0,0,0] & $83.3382 \pm 0.1582$ & 4.6387 \\
& RASA-DL-A & $[12,3,1,4]$ & $86.0135 \pm 0.4758$ & 4.6387 \\
& RASA-DL-B & $[6,4,6,4]$ & $89.4388 \pm 0.1789$ & 6.2222 \\
& RASA-DL-C & $[6,3,4,7]$ & $89.9483 \pm 0.5888$ & 6.8466 \\
\bottomrule
\end{tabular}
\vspace{-0.5cm}
\end{table*}

\paragraph{Convergence behavior}
To further examine whether the final performance differences arise from stable decentralized training, Fig.~\ref{fig:convergence} compares the learning trajectories under the three resource regimes. To keep the comparison focused, we show the maximum-feasible Fixed Uniform deployment, DFedAvgM, and the two endpoint configurations of the RASA-DL frontier, RASA-DL-A and RASA-DL-C. The curves report the three-seed mean weighted node accuracy at each communication round.

Across all three regimes, the evaluated configurations exhibit stable learning behavior over the $50$ communication rounds. RASA-DL-A and RASA-DL-C converge consistently to their respective final operating points, with RASA-DL-C attaining higher final accuracy due to its higher-capacity deployment. The trajectories also show that the final differences in Fig.~\ref{fig:accuracy_latency} emerge progressively during decentralized training rather than from transient fluctuations at the final round.

\begin{figure*}[t]
    \centering
    \includegraphics[width=\textwidth]{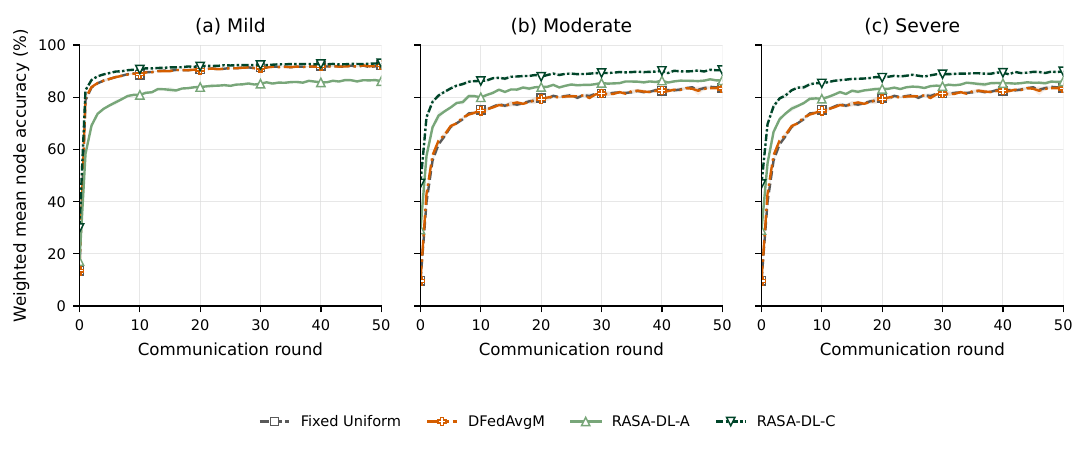}
    \caption{Convergence behavior under (a) Mild, (b) Moderate, and (c) Severe resource heterogeneity. Curves show the mean weighted node accuracy over three random seeds, and shaded regions indicate $\pm1$ sample standard deviation. Fixed Uniform and DFedAvgM use the largest globally feasible uniform width in each regime, i.e., $0.5$ for Mild and $0.25$ for Moderate and Severe. RASA-DL-A and RASA-DL-C represent the minimum-latency and maximum-profiled-utility endpoints of the RASA-DL deployment frontier, respectively.}
    \label{fig:convergence}
    \vspace{-0.5cm}
\end{figure*}

\paragraph{Discussion}
Taken together, the results indicate that the benefit of heterogeneous deployment is governed less by the scalar heterogeneity index $H$ itself than by the capacity gap induced by heterogeneous resource constraints. In particular, RASA-DL becomes most useful when the largest model that is feasible for all nodes is substantially smaller than the models supported by many individual nodes. By exploiting this otherwise unused capacity, RASA-DL converts device-level resource diversity into a configurable accuracy--latency tradeoff instead of forcing the entire decentralized system to operate at the capability of its least-capable participant.

\subsection{Comparison Between Standalone and Deployment-Aware Profiling}
\label{sec5_3}
We further compare the learning trajectories of the deployments selected by standalone and deployment-aware capability profiling under the same
utility requirement. Specifically, we fix the utility target to $U^{\mathrm{tar}}=0.89$, corresponding to the RASA-DL-B experiments. Under this common target, standalone and deployment-aware profiling select different deployments in all three controlled resource regimes. The subsequent decentralized training procedure is identical in the two cases; hence, the only difference originates from the capability profile used for deployment selection.

Fig.~\ref{fig:profiling_convergence} reports the weighted mean node accuracy as a function of cumulative modeled latency. Each curve shows the mean over three random seeds, with the shaded region denoting $\pm1$ sample standard deviation. The horizontal dashed line indicates the common requested learning-capability target of $89\%$. 
The two profiling strategies lead to clearly different learning outcomes under the same requested capability. Standalone profiling assigns more optimistic utility to compact sub-networks and therefore selects the lower-capacity deployment, whereas deployment-aware profiling selects the higher-capacity deployment in all three regimes. Although the latter incurs a larger modeled per-round latency, its higher-capacity allocation consistently reaches a higher accuracy level. At the end of training, deployment-aware profiling improves the weighted mean node accuracy in all regimes. The resulting accuracy--latency trajectories therefore demonstrate that deployment-aware profiling affects realized decentralized learning performance through deployment selection rather than through any modification of the training algorithm. 

\begin{figure*}[t]
    \centering
    \includegraphics[width=\textwidth]
    {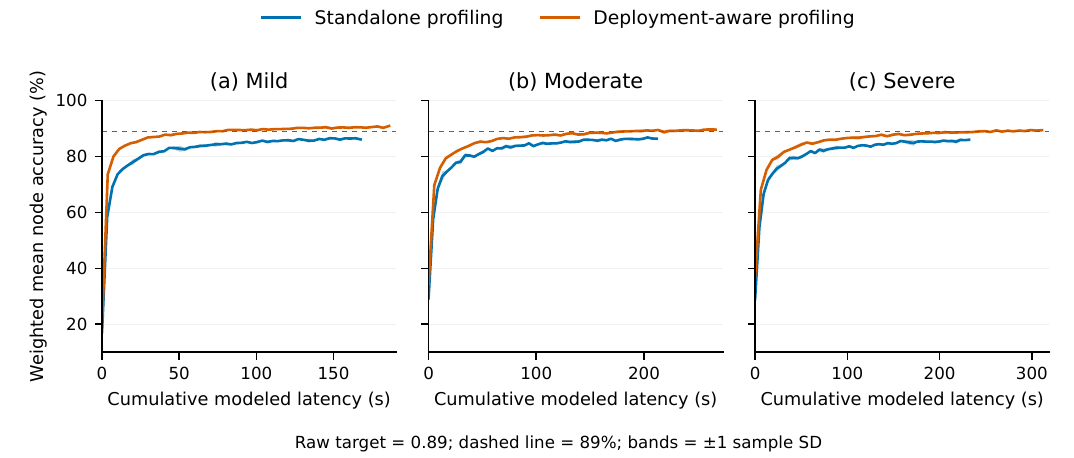}
    \caption{
    Accuracy progression of the deployments selected by standalone and deployment-aware learning-capability profiling under Mild, Moderate, and Severe resource heterogeneity.}
    \label{fig:profiling_convergence}
    \vspace{-0.5cm}
\end{figure*}

\subsection{Generalization to CIFAR-100}
\label{sec5_4}
To examine whether the proposed profiling-and-allocation framework generalizes beyond CIFAR-10, we conduct an additional experiment on CIFAR-100 while retaining the same ResNet-18 backbone and the frozen Moderate resource configuration. The training protocol is kept unchanged, including 50 communication rounds, 100 local optimization steps per node per round, a batch size of 64, and SGD with a learning
rate of $0.001$, momentum of $0.9$, and weight decay of $5\times10^{-4}$. An IID data partition is used, and all results are averaged over three
training seeds. Since the learning task changes, the deployment-aware utility profile is recomputed on CIFAR-100 and the RASA-DL deployment
frontier is reconstructed accordingly. No operating point is selected using the final test accuracy.

\begin{table}[t]
\centering
\small
\caption{Task generalization on CIFAR-100 under the frozen Moderate
resource regime. Accuracy is the final-round weighted mean node test
accuracy over three seeds, reported as mean $\pm$ sample SD. Latency is
the modeled bottleneck round latency.}
\label{tab:cifar100_generalization}
\begin{tabular}{lccc}
\hline
Method & Deployment & Latency (s) & Accuracy (\%) \\
\hline
Fixed Uniform & $[20,0,0,0]$ & 3.959 & $50.01 \pm 0.84$ \\
DFedAvgM & $[20,0,0,0]$ & 3.959 & $49.62 \pm 0.43$ \\
RASA-DL-A & $[11,4,1,4]$ & 3.959 & $58.27 \pm 0.12$ \\
RASA-DL-B & $[6,9,1,4]$ & 4.393 & $63.51 \pm 0.27$ \\
RASA-DL-C & $[5,4,4,7]$ & 6.035 & $67.16 \pm 0.49$ \\
\hline
\end{tabular}
\vspace{-0.5cm}
\end{table}

Table~\ref{tab:cifar100_generalization} summarizes the resulting accuracy--cost tradeoff. Under the frozen Moderate resource regime, the
largest globally feasible uniform deployment remains the $0.25$-width sub-network. Fixed Uniform and DFedAvgM therefore deploy the smallest sub-network on all 20 nodes and achieve $50.01\%$ and $49.62\%$ accuracy, respectively, at a modeled round latency of $3.959$~s. In contrast, RASA-DL-A assigns larger sub-networks to resource-capable nodes, with a deployment composition of $[11,4,1,4]$ over retention ratios
$[0.25,0.50,0.75,1.0]$. Despite this heterogeneous allocation, RASA-DL-A retains the same modeled round latency of $3.959$~s and achieves $58.27\%$ accuracy, corresponding to gains of $8.26$ and $8.65$ percentage points over Fixed Uniform and DFedAvgM, respectively.

The higher-capacity operating points further improve accuracy at increased modeled system cost. RASA-DL-B reaches $63.51\%$ accuracy at
$4.393$~s with deployment composition $[6,9,1,4]$, while RASA-DL-C achieves $67.16\%$ at $6.035$~s with composition $[5,4,4,7]$. These
results exhibit the same qualitative behavior observed on CIFAR-10: RASA-DL-A improves learning performance over uniform deployment without
increasing the modeled bottleneck round latency, whereas RASA-DL-B and RASA-DL-C provide progressively higher accuracy at higher modeled deployment cost.

Overall, the CIFAR-100 experiment indicates that the benefit of resource-aware heterogeneous sub-network allocation is not restricted to the original CIFAR-10 task. Under the same backbone and Moderate system realization, RASA-DL continues to identify useful heterogeneous
deployment operating points on the more challenging 100-class classification task. 

\subsection{Robustness to Statistical Heterogeneity}
\label{sec5_5}
We further evaluate whether the heterogeneous deployments identified by RASA-DL remain effective when the client data are statistically heterogeneous. To isolate the effect of data heterogeneity from system heterogeneity, we fix the Moderate resource realization, network
topology, deployment allocations, utility profile, training hyperparameters, and communication protocol throughout this study. 
Only the client data partition is changed. In particular, the CIFAR-10 training set is partitioned across the 20 nodes using Dirichlet label
skew with concentration parameters $\alpha\in\{1.0,0.5,0.2,0.1\}$, where a smaller $\alpha$ corresponds to stronger statistical heterogeneity. For each $\alpha$, a single partition realization is fixed across all methods and training seeds.
Importantly, the RASA-DL utility profile and the resulting A/B/C deployments are kept unchanged from the original deployment design.

\begin{table*}[t]
\centering
\small
\caption{Robustness under Dirichlet Non-IID data in the Moderate resource regime. Accuracy is the weighted mean node accuracy (\%) over $3$ seeds, reported as mean $\pm$ sample SD. Smaller $\alpha$ indicates stronger statistical heterogeneity. Latency is the frozen modeled bottleneck round latency.}
\label{tab:moderate-noniid-accuracy-latency}
\begin{tabular}{lccccc}
\hline
Method & Latency (s) & $\alpha=1.0$ & $\alpha=0.5$ & $\alpha=0.2$ & $\alpha=0.1$ \\
\hline
Fixed Uniform & 4.2358 & $77.32 \pm 0.60$ & $75.72 \pm 0.67$ & $50.70 \pm 2.18$ & $30.96 \pm 0.56$ \\
DFedAvgM      & 4.2358 & $78.64 \pm 0.41$ & $75.94 \pm 0.17$ & $51.23 \pm 0.31$ & $32.78 \pm 0.16$ \\
RASA-DL-A            & 4.2358 & $82.11 \pm 0.25$ & $80.14 \pm 0.65$ & $63.18 \pm 1.02$ & $49.64 \pm 1.41$ \\
RASA-DL-B            & 5.3293 & $85.85 \pm 0.07$ & $84.31 \pm 0.13$ & $73.28 \pm 0.25$ & $60.74 \pm 0.97$ \\
RASA-DL-C            & 6.1990 & $86.68 \pm 0.23$ & $84.86 \pm 0.49$ & $75.63 \pm 0.58$ & $66.45 \pm 0.59$ \\
\hline
\end{tabular}
\end{table*}

Table~\ref{tab:moderate-noniid-accuracy-latency} summarizes the final weighted mean node accuracy together with the frozen modeled bottleneck round latency. All methods experience performance degradation as the client data become more heterogeneous, particularly for $\alpha=0.2$ and $\alpha=0.1$. Nevertheless, the heterogeneous RASA-DL deployments consistently outperform both Fixed Uniform and DFedAvgM across all evaluated Dirichlet settings. A direct cost-matched comparison is provided by RASA-DL-A and Fixed Uniform-25, which have the same modeled bottleneck round latency of $4.2358$~s under the frozen Moderate system realization. RASA-DL-A improves accuracy by $4.79$, $4.42$, $12.48$, and $18.68$ percentage points for $\alpha=1.0$, $0.5$, $0.2$, and $0.1$, respectively. The improvement is positive for every training seed in all four settings, and becomes more pronounced as statistical heterogeneity increases.

These results indicate that the RASA-DL deployments remain effective under substantial statistical heterogeneity without modifying the utility profile or reallocating model capacity. In particular, when strong label skew causes the globally feasible minimum-width deployment to suffer a large accuracy loss, retaining larger sub-networks on resource-capable nodes provides a more favorable accuracy--latency tradeoff.

\section{Conclusion}\label{sec6}
This article studied resource-aware model-capacity configuration for synchronous decentralized learning over heterogeneous devices. We proposed RASA-DL, a profiling-guided sub-network allocation framework that jointly considers computation, communication, memory, and learning capability when assigning node-specific sub-networks. By formulating deployment as a bottleneck-latency minimization problem subject to memory and aggregate-utility constraints, and exploiting the monotone feasibility of the latency threshold, RASA-DL enables efficient allocation without exhaustive search.

Experiments under controlled resource heterogeneity demonstrated favorable accuracy--latency tradeoffs compared with uniform deployments and DFedAvgM. The results further validated the benefit of deployment-aware capability profiling and showed that the resulting heterogeneous deployments remain effective under Non-IID data distributions. Future work will consider time-varying resource conditions and online adaptation of model capacity, as well as joint optimization of sub-network allocation and decentralized communication strategies.

\small
\bibliographystyle{IEEEtran}
\bibliography{ref}

\end{document}